\newtheorem{theorem}{Theorem}
\newtheorem{corollary}{Corollary}
\def\BibTeX{{\rm B\kern-.05em{\sc i\kern-.025em b}\kern-.08em
    T\kern-.1667em\lower.7ex\hbox{E}\kern-.125emX}}
\newcommand{\note}[1]{{\color{red} #1}}
\begin{document}
\bstctlcite{IEEEexample:BSTcontrol}

\title{Constrained Deep Reinforcement Learning for Fronthaul Compression Optimization\\
\thanks{ This work has received funding from the European Union's Horizon Europe research and innovation programme under the Marie Skłodowska-Curie grant agreement No. 101073265.}
}

\author{\IEEEauthorblockN{Axel Gr\"onland\textsuperscript{1,2}, Alessio Russo\textsuperscript{1}, Yassir Jedra\textsuperscript{1}, Bleron Klaiqi\textsuperscript{2}, Xavier Gelabert\textsuperscript{2}}
\IEEEauthorblockA{\textit{\textsuperscript{1}Royal Institute of Technology (KTH), Stockholm, Sweden} \\ \textit{\textsuperscript{2}Huawei Technologies Sweden AB, Stockholm Research Centre, Sweden} \\
\{gronland, alessior, jedra\}@kth.se, \{bleron.klaiqi, xavier.gelabert\}@huawei.com }
}

\maketitle

\begin{abstract}
In the Centralized-Radio Access Network (C-RAN) architecture, functions can be placed in the central or distributed locations. This architecture can offer higher capacity and cost savings but also puts strict requirements on the fronthaul (FH). Adaptive FH compression schemes that adapt the compression amount to varying FH traffic are promising approaches to deal with stringent FH requirements.  
In this work, we design such a compression scheme using a model-free off policy deep reinforcement learning algorithm which accounts for FH latency and packet loss constraints. Furthermore, this algorithm is designed for model transparency and interpretability which is crucial for AI trustworthiness in performance critical domains. We show that our algorithm can successfully choose an appropriate compression scheme while satisfying the constraints and exhibits a roughly 70\% increase in FH utilization compared to a reference scheme.
\end{abstract}

\begin{IEEEkeywords}
C-RAN, fronthaul, machine learning, reinforcement learning, performance evaluation.
\end{IEEEkeywords}

\section{Introduction}
\label{sec:introduction}
Centralized Radio Access Network (C-RAN) allow the splitting of RAN functionalities between remote radio units (RRU) near antenna sites and the baseband unit (BBU) at a centralized location.
In this setup, a centralized pool of BBUs can jointly process RAN functions from multiple RRUs, allowing better resource utilization and dimensioning than non-centralized RAN options 
 \cite{CRAN_ChinaMobile}. C-RAN also offers increased maintainability, flexibility, upgradability, and improved coordination features such as coordinated multi-point (CoMP) and inter-cell interference coordination (ICIC) among others \cite{checko_cloud_2015}.

Conversely, C-RAN deployments may cause high data rate requirements on the fronthaul (FH) and increased latency in the signal processing chain \cite{checko_cloud_2015}. A major challenge in C-RAN deployments is the huge demands on bandwidth aggregation required for the FH, especially for specific split options (see e.g. \cite{duan_performance_2016}). 
Fortunately, one can resort to more favourable splits \cite{rodriguez_cloud-ran_2020}, and data compression methods \cite{lorca_lossless_2013, lagen_fronthaul-aware_2021,lagen_fronthaul_2022} that can help diminish the FH link data rate demands. Despite the use of these methods, further development is necessary and additional measures are required, as explained hereafter. 

Various approaches have been employed in recent studies to address the challenges mentioned above. For instance, \cite{liu_graph-based_2015} introduces a graph-based framework that effectively reduces the FH cost by appropriately splitting and placing baseband processing functions within the network.
A lossless FH compression technique (FH) is introduced in \cite{lorca_lossless_2013}, which relies on the proportion of utilized resources.
In \cite{lagen_modulation_2021}, authors offer insights into FH compression using modulation compression, with a reported reduction in required FH capacity of up to 82\%.
Modulation compression and scheduling strategies are combined in \cite{lagen_fronthaul_2022} to further optimize the use of FH-limited deployments. In \cite{Kang_FH_Comp_Prec_2016}, joint FH compression and precoding design is proposed, where two different splits are investigated to determine the location of the precoder.
Noteworthy, the works mentioned above rely on conventional mathematical optimization approaches, which are generally complex and require the availability of underlying models. However, obtaining optimal solutions can be challenging due to their high complexity and the difficulty of acquiring underlying models for realistic scenarios.
To help with this, Machine Learning (ML) approaches can be employed. The literature extensively covers the use of ML techniques for complex optimization problems in wireless networks, see e.g. \cite{ali_6g_2020} and sources cited therein. In the C-RAN domain, supervised learning has been used to investigate UE slicing and functional split optimization \cite{Matoussi_dl_ue_slice_2020}. However, obtaining high-quality labels for supervised learning can be challenging and costly in practice. Another proposed framework, in \cite{murti_learning-based_2022}, uses model-free reinforcement learning (RL) to jointly select the most suitable split and computing resource allocation.


In this work, we discuss the configuration and optimization of the FH in the downlink (DL) direction assuming a C-RAN deployment. We propose an off-policy constrained reinforcement learning framework. We show that we can partition value iteration which is useful for distribution and explainability. We implement two classical Deep RL algorithms which dynamically adjusts various parameters associated with FH compression schemes, such as modulation order, precoder granularity and precoder weight quantization. The primary objective is to maximize FH utilization, and thus the air interface throughput, while ensuring that FH latency and the packet loss rate remains below predefined thresholds. Our contribution is unique as it addresses the problem of FH compression optimization using learning-based techniques under FH latency and packet loss constraints. 

The remainder of this paper is organized as follows. In Section \ref{sec:background} we introduce the framework of Markov Decision Processes (MDPs) and Reinforcement Learning (RL). In Section \ref{sec:system_models} we describe the assumed C-RAN scenario, outline our simulation setup, and provide a description of our system models. The formulation of optimization problem is given in Section \ref{sec:prob_formulation}. We derive a RL problem formulation and show its equivalence to the original problem. Next, we derive a reward function that maps to the problem we are solving, along with some additional techniques to stabilize and speed up learning. In Section \ref{sec:numerical_evaluation}, we present the results of our experiments, focusing on the expected throughput gain compared to other less dynamic policies. Lastly, Section \ref{sec:conclusions} provides an overview of the final remarks.

\section{Background}\label{sec:background}
In this section we describe the framework of Markov Decision Processes and Reinforcement Learning.

\subsection{Markov Decision Process}\label{sec:background-MDP}


A Markov Decision Process  (MDP) is described by a tuple $(S,A, P, r,p_0)$ \cite{puterman2014markov}: $S$ is the state-space, representing all possible states that the system can be in; $A$ is the action-space, representing the set of all actions that an agent can take; $P: S\times A \times S\to[0,1]$ is the transition kernel, where  $P(s'|s,a)$ corresponds to the probability of transitioning from state $s$ to state $s'$ upon selecting action $a$;  $r: S\times A \to \mathbb{R}$ is the reward function, providing a scalar reward after taking an action in a given state; lastly, $p_0: S\to[0,1]$ denotes the initial state distribution of the MDP.
The system evolves over discrete time steps. At each step $t$, the MDP is in a state $s_t$, which is observed by an agent. The agent selects an action $a_t\in A$, according to a stationary Markov policy $\pi: S\times A \to [0,1]$, influencing the state transition governed by $P$.
The agent's objective is to find a policy $\pi$ that maximizes the total expected discounted reward over an infinite horizon. This is quantified by the value function under policy $\pi$, denoted as $V^\pi(s)$, and defined as:
$V^\pi(s) \coloneqq \mathbb{E}^\pi\left[\sum_{t=0}^{\infty} \gamma^t r(s_t, a_t)|s_0=s \right]$ for some discount factor $\gamma \in (0,1)$ and starting state $s\in S$. Similarly, we define the $Q$-value function of a policy as $Q^\pi(s,a)=\mathbb{E}^\pi\left[\sum_{t=0}^{\infty} \gamma^t r(s_t, a_t)|s_0=s,a_0=a \right]$. Finally, we denote by $\pi^\star(s) = \arg\max_\pi V^\pi(s)$ the optimal policy (or greedy policy).

\subsection{Reinforcement Learning}
Reinforcement Learning (RL) deals with the problem of learning an optimal policy $\pi$ in an MDP with unknown dynamics or large dimensional state-action spaces \cite{sutton2018reinforcement}. Off-policy methods like $Q$-learning \cite{watkins1992q} can be used for MDPs with finite state-action spaces to learn the greedy policy $\pi^\star$. However, in the context of high-dimensional state and action spaces, traditional tabular methods fall short. Deep RL algorithms, such as Deep $Q$-Networks (DQN) and Soft Actor-Critic (SAC), leverage neural networks to approximate the value function  $V^\pi$ and/or the policy $\pi$.  DQN \cite{mnih2015human}, an extension of the $Q$-Learning algorithm,  uses a deep neural network to approximate the $Q$-function $Q^\pi$, enabling RL in environments with continuous state-spaces. and finite-action spaces. SAC \cite{haarnoja2018soft} is an actor-critic algorithm, based on $Q$-learning, designed for environments with continuous action spaces, incorporating an entropy term to achieve a balance between exploration and exploitation. We refer the reader to \cite{mnih2015human,haarnoja2018soft} for more information regarding these algorithms.

\section{Problem Formulation and Model}
\label{sec:system_models}

In this section, we first present a formal description of the FH compression optimization problem in C-RAN under latency constraints. Then, we formulate this problem as a constrained MDP.

\subsection{C-RAN scenario description}\label{sec:Scenario_Desc}
It is assumed a 3GPP NR Time Division Duplex (TDD) system, with data flowing in the DL direction. In the frequency domain, Orthogonal Frequency Division Multiplexing (OFDM) subcarriers are separated by given subcarrier spacing (SCS), defined as $\Delta f_{\textsc{scs}}=15\cdot2^\mu$ (kHz) with SCS index $\mu=\{0,1,2,3,4\}$. The total available bandwidth $B$ (Hz) is divided into  $N^{\textsc{prb}}_{\textsc{B},\mu}$ Physical Resource Blocks (PRBs), each PRB containing 12 consecutive subcarriers. In the time domain, transmission of UL and DL data is cyclically carried out over a predefined time duration measured in UL and DL \emph{slots} respectively, following a pattern given by the configured TDD frame-structure. Each slot has a duration of 14 symbols, that is $T_{slot}^\mu=14\cdot T_{symb}^\mu $, with $T_{symb}^\mu$ the duration of an OFDM symbol. It is also common to define the number of subcarrier-symbol pairs contained in a single PRB during the duration of a slot, namely Resource Elements (REs), hence $N_{\textsc{re}}=12\cdot14=168$. Massive MIMO with $N_{ant}$ antennas at the transmitter is considered, where digital precoding is applied with pre-calculated weights based on channel estimations from UL pilot measurements \cite{khorsandmanesh_quantization-aware_2022}. As a result, up to $\upsilon_{lay}\le N_{ant}$ spatially multiplexed users (or layers) can be scheduled at the same time, over the same PRB.

In this study, we focus on the C-RAN architecture \cite{CRAN_ChinaMobile}, which allows the centralization of the baseband processing for multiple ($K$) geographically distributed RRUs, each serving a cell. The processing can be split between a centralized location housing a pool of BBUs and said individual RRUs.
Fig. \ref{fig:scenario}(a) illustrates the physical architecture for a simple $K=3$ cell scenario, where the RRUs and the BBU pool are interconnected via the FH. The FH provides a link of capacity of $C_{\textsc{fh}}$ (Gb/s), and transports the aggregate DL generated traffic towards the RRUs.  
From a logical architecture perspective, see Fig. \ref{fig:scenario}(b), and for each cell served by a RRU, we can consider the Baseband Low (BBL) entity, residing at each RRU, and the Baseband High (BBH) entity residing at the BBU pool. The term \emph{split} will be used hereafter to describe the amount of baseband processing functions residing in the BBL and the BBH. 
\begin{figure}
    \centering
    \includegraphics[scale=0.85]{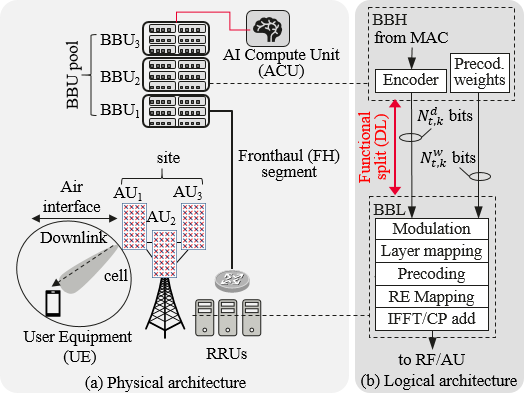}
    \caption{Considered scenario with (a) physical and (b) logical architectures.}
    \label{fig:scenario}
\end{figure}
Similar to \cite{lorca_lossless_2013, khorsandmanesh_quantization-aware_2022}, the adopted DL split considers the encoded user data and the precoding weights to be sent separately over the FH, see Fig. \ref{fig:scenario}(b). The encoded user data (in bits) can be mapped to the corresponding modulated symbols at the BBL, thus experiencing no quantization errors over the FH. The precoding weights on the other hand, being complex-valued samples, need to be quantized at a bit resolution $b^w$ (bits/sample), thus being prone to quantization errors. 
\subsection{FH compression configuration}\label{sec:Compression_config}
The FH capacity dimensioning usually takes into account the statistical multiplexing gain resulting from the spatial traffic distribution at different cells in a given area \cite{wangFronthaulStatisticalMultiplexing2017}. In practice, this means that the FH capacity is usually dimensioned below the maximum aggregated peak data rates across all cells, since the probability of such event is considered low. Nonetheless, in the event of high load across all cells, keeping the FH throughput below the (under-)dimensioned capacity will require compression methods. Specifically, three methods will be considered in this paper. Firstly, the modulation order, which in NR is given by the set $\mathcal{Q}^{\textsc{NR}}=\{2,4,6,8\}$ bit/symbol \cite{3gpp_ts_38.211_v15.5.0_nr}, can be restricted to some maximum value $q\in{\mathcal{Q}^{\textsc{NR}}}$ so that symbols are represented with fewer bits, thus lowering the FH throughput. This will, of course, have an impact on the air interface throughput, but its use can be limited to specific time instants where the FH cannot support the offered traffic. Second, we can modify the bitwidth $b^w\in\mathbb{N}$ (expressed in bits) for precoding weights used to quantify and represent complex-valued samples for their transmission over the FH. 
Finally, one can tune the sub-band precoding granularity, $r^w$, which reflects the number of subcarriers that can be assumed to share similar channel conditions and therefore admit the same precoding weight value \cite{lorca_lossless_2013}. We define the sub-band precoding granularity as the number of consecutive PRBs being applied the same precoding weight. For example, $r^w=1$ indicates that each PRB in the entire bandwidth will be precoded with its own distinctive weight, whereas $r^w=2$ means that two consecutive PRBs will be precoded with the same weight (effectively halving the weight payload to be transmitted over the FH), etc.

To ensure optimal air interface performance, it is crucial to employ FH compression only when necessary. This means using it when the FH utilization (the ratio between used and available FH capacity) is high and there is a risk of not delivering packets on time over the FH. At medium and low FH loads, compression should be reduced to increase FH utilization while maintaining performance below the limit. With this in mind, the main optimization criteria going forward is to maximize FH utilization while keeping FH latency and packet loss within acceptable limits. The latency for cell $k$, $L_k$, accounts for the queuing time in BBH and in the switcher as well as the transmission time over the FH link. We further denote the maximum latency as $L_{\max}$. In the event of high traffic the switcher may experience buffer overflow resulting in packets loss. We denote $LP_k$ as the packet loss for cell $k$ and aim to maintain the total packet loss over all cells $\sum_{k} LP_k = 0$.

\subsection{FH utilization}
Considering the functional split given in Sec. \ref{sec:Scenario_Desc}, the FH transports the aggregated traffic towards the different cells, comprising the data payload and precoding weights. The data payload (in bits) to be delivered at a given time slot $t$ intended for the cell $k$ is given by:
\begin{align}
N_{t,k}^d = N_{\textsc{re}}\cdot\upsilon_{lay}\cdot N_{t,k}^{\textsc{prb}}\cdot q_{t,k},
\end{align}
where $\upsilon_{lay}$, $N_{t,k}^{PRB}$, and $q_{t,k}$ denote number of layers, number of the allocated PRBs and modulation order, respectively.

In a similar way, the number of precoding weight bits transmitted at slot $t$ towards cell $k$ can be obtained as follows:
\begin{align}
N_{t,k}^w = \Biggl\lceil\frac{N_{t,k}^{\textsc{prb}}}{r_{t,k}^w}\Biggr\rceil\cdot \upsilon_{lay}\cdot N_{ant}\cdot b_{t,k}^w,
\end{align}
with $r_{t,k}^w$, $N_{ant}$, and $b_{t,k}^w$, the precoder granularity, the number of antennas, and the weight bit quantization, respectively.

The FH data rate at slot $t$ for cell $k$ is given by:
\begin{align}
R_{t,k}^{\textsc{fh}}= \frac{1}{T_{slot}^\mu}\left(N_{t,k}^d+N_{t,k}^w\right),
\end{align}
where $T_{slot}^\mu$ is the slot duration. The FH utilization is the ratio between the FH data rate and the FH capacity, $\rho_{t,k} = \frac{R_{t,k}^{\textsc{FH}}}{C_{\textsc{FH}}}$. We  abuse notation and also denote $\rho_t = \sum_k^K \rho_{t,k}$.

\subsection{Constrained MDP Formulation}\label{sec:III-MDPformulation}
Recalling that the compression configuration is controlled via three different parameters $(q_{k}, b_{k}^w, r_{k}^w)$,  we  define the system state at time $t$ to be: 
\begin{equation}\label{eq:state}
s_t=\{(\rho_{t, k},L_{t,k}, LP_{t,k}, q_{t,k}, b_{t,k}^w, r_{t,k}^w), k\in[K]\},
\end{equation}
which  includes also the FH utilization $\rho_{t, k}$, the FH latency $L_{t,k}$, the lost packets $LP_{t,k}$ and the configuration parameters $(q_{t,k}, b_{t,k}^w, r_{t,k}^w)$ at time $t$ and cell $k$.

The action at time $t$ is defined as an incremental change in configuration as follows:
\begin{equation}\label{eq:action}
a_t = \{(\Delta q_{t,k}, \Delta b_{t,k}^w, \Delta r_{t,k}^w ), k\in [K]\}.
\end{equation}
Here, $\Delta q_{t,k},\Delta b_{t,k}^w,\Delta r_{t,k}^w \in \{-1, 0, 1\}$ denote, respectively, changes in the parameters $q_{t, k}$, $b_{t,k}^w$, and $r_{t,k}^w$. More specifically, $\Delta q_{t,k} = -1$ leads to a lower modulation order, while $\Delta q_{t,k} = 1$ moves to higher modulation order  and $\Delta q_{t,k} = 0$ causes  no change. Same action encoding applies to $\Delta b_{t,k}^w$ and $\Delta r_{t,k}^w$. 

Note that the size of the joint action-spaces is exponential in $K$. Hence, for DQN we assume \emph{homogeneous load}, so that the action spaces is of constant size, i.e., $O(1)$. Alternatively, using actor-critic algorithms (such as SAC) it is possible to alleviate this issue by using an auto-regressive policy (see also the appendix\footnote{Technical report with appendix: \url{https://arxiv.org/abs/2309.15060}}).

Lastly, we define the reward and the constraints. The reward at time $t$ is defined as
\begin{equation}\label{eq:reward}
    r(s_t, a_t) = \rho_t.
\end{equation}
For the cell $k$, the latency $L_k$ and packet loss $LP_k$ constraints are introduced in the problem formulation as follows
\begin{align}
    \mathbb{P}_{s\sim d^\pi}\left(\max_k L_{k} > L_{max}\right) & < \xi, \label{eq:constraint_1} \\
    \mathbb{P}_{s\sim d^\pi}\left(\sum_k LP_{k} \neq 0 \right) & < \xi, \label{eq:constraint_2}
\end{align}
where $\xi$ is a prescribed confidence level and $d^\pi$ denotes the ergodic stationary measure under policy $\pi$.


\section{Method}\label{sec:prob_formulation}

In this section, we present RL methods to solve constrained MDPs . For clarity, we will consider a generic constrained RL problem modeled as an infinite horizon MDP with discounted factor $\gamma$: 
\begin{align}
\max_{\pi}\qquad &\mathbb{E}_{s\sim p_0}\left[\sum_t^\infty\gamma^tr(s_t, a_t) | s_0=s\right] ,\label{eq:Objective}\\
\textrm{s.t.} \qquad &\mathbb{P}_{s\sim d^\pi}\big(s\notin\mathcal{S}_i) < \xi_i,\qquad i=1,\dots,N, \nonumber\\
& s_{t+1} \sim P(\cdot|s_t,a_t), a_t\sim \pi(\cdot|s_t),\nonumber
\end{align}
where $\mathcal{S}_i$ and $\xi_i$ are the safety set and confidence level parameter for the $i$-th constraint, respectively. We remark that our constrained MDP formulation presented in section \ref{sec:III-MDPformulation} is an instance of the optimization problem defined in \eqref{eq:Objective}. Indeed, the states are given by \eqref{eq:state}, the actions are given by \eqref{eq:action}, the rewards are given by \eqref{eq:reward}, and the constraints are defined in \eqref{eq:constraint_1} and \eqref{eq:constraint_2}.


\subsection{Constrained Reinforcement Learning Approach}

\paragraph{Formulating the dual problem} 
Just as in previous works \cite{russo2022balancing,paternain2022safe}, we relax the problem by considering the constraints over the discounted stationary distribution of the states $d_\gamma^\pi(s) = (1-\gamma) \sum_{s'} p_0(s')\sum_{t\geq 0}\gamma^t \mathbb{P}(s_t=s|\pi,s_0=s')$, which  is the discounted probability of being in state $s$ at time $t$ when following policy $\pi$. In particular, when $\gamma\to 1$ we find the original optimization problem \cite{russo2022balancing,puterman2014markov}. Then, notice that $\mathbb{E}_{s\sim d_\gamma^\pi}[ \mathbf{1}_{\{s \notin S_i\}} ] = 1-(1-\gamma) \mathbb{E}_{s_0\sim p_0}^\pi[\sum_{t\geq 0} \gamma^t\mathbf{1}_{\{s_t \in S_i\}}]$. Let $\lambda=\begin{bmatrix} 1 & \lambda_1 &\dots &\lambda_N\end{bmatrix}^\top$ and 
\begin{align*}
    R(s,a) \coloneqq \begin{bmatrix}
    r_0(s,a) & r_1(s,a) &\dots &  r_N(s,a)
\end{bmatrix}^\top,
\end{align*} 
where
$r_0(s,a) = r(s,a)$ and $r_i=(1-\gamma)\mathbf{1}_{\{s \in S_i\}}$ for $i=1,\dots,N$. Thus, we can approximate the previous problem in  \eqref{eq:Objective} as an unconstrained optimization problem
\begin{equation}
    \min_{\lambda_1, \dots,\lambda_N>0}\max_{\pi}\mathbb{E}_{s\sim p_0}^\pi\left[\sum_t^\infty\gamma^t \lambda^\top R(s,a) \Big | s_0=s\right] + \lambda^\top \xi,
\end{equation}

where $\xi =\begin{bmatrix} 0 & \xi_1-1& \dots & \xi_N-1\end{bmatrix}^\top$. We observe that for a fixed vector $\lambda$ the inner maximization amounts to solving an RL problem. Consequently, we can see address the problem using a descent-ascent approach, where we use classical RL techniques to find an optimal policy that maximizes $\mathbb{E}_{s\sim p_0}^\pi\left[\sum_t^\infty\gamma^t \lambda^\top R(s,a) \Big | s_0=s\right]$, while updating the dual variables $\lambda_i$.

\paragraph{Value function decomposition} Notice that the inner problem is linear in the Lagrangian variables. Then, we can rewrite the  objective function as
\begin{equation}
    \min_{\lambda_1, \dots,\lambda_N>0}\max_{\pi} \lambda^\top V^\pi + \lambda^\top \xi,
\end{equation}
with $V^\pi = \begin{bmatrix}
    \mathbb{E}_{s\sim p_0}[V_0^\pi(s)] & \dots & \mathbb{E}_{s\sim p_0}[V_N^\pi(s)]
\end{bmatrix}^\top$ and $V_i^\pi(s) = \mathbb{E}^\pi\left[\sum_t^\infty\gamma^t r_i(s,a) \Big | s_0=s\right]$.
We propose to use RL to learn separately each term $V_i^\pi(s)$ (as well as the greedy policy of $\lambda^\top V^\pi$), instead of learning directly $\lambda^\top V^\pi$ and its greedy policy. Empirically, this seems to improve convergence. Practically, one may be interested in knowing the value for each separate reward function $r_i$ (and thus,  derive some insights  on the value of the constraints).
To that aim, we need a theorem that guarantees that updating each value function separately still leads to the same optimal policy $\pi^\star$ that maximizes $\mathbb{E}_{s\sim p_0}^\pi\left[\sum_t^\infty\gamma^t \lambda^\top R(s,a) \Big | s_0=s\right] $. We also let $Q_\lambda^\star$ be the $Q$-value function of the policy $\pi^\star =\arg\max_\pi \sum_i\lambda_i V_i^{\pi}$.

Then, let $K=\mathbb{R}^{(N+1)\times |S|\cdot|A|}_+$ be the set of real positive matrices of size $(N+1)\times |S|\cdot|A|$, and let $Q=\begin{bmatrix}Q_1 &\dots &Q_{N+1}\end{bmatrix}^\top\in K$, where each $Q_i$ is of size $|S|\cdot|A|$. Define ${\cal T}_{ \lambda}:K \to K$ to be the following  the Bellman operator
\begin{equation}
\resizebox{\hsize}{!}{%
${\cal T}_{ \lambda}  Q(s,a) \coloneqq  r(s,a) + \gamma \mathbb{E}_{s'}\left[ Q(s', \arg\max_{a'}  Q(s',a')^\top  \lambda) \right].$%
}
\end{equation}
where $Q(s,a)$ in this case is a row vector of size $N+1$ and $s'\sim P(\cdot|s,a)$. 
The following theorem guarantees that by using value-iteration we can indeed find the same optimal policy.
\begin{theorem}\label{theorem:convergence_value_iteration}
    There exists $Q^\star \in K$ satisfying the fixed point
\[
  Q^\star(s,a) =  r(s,a) + \gamma \mathbb{E}_{s'}\left[ Q^\star(s', \arg\max_{a'}  Q^\star(s',a')^\top  \lambda) \right].
\]
Then, value iteration using ${\cal T}_{ \lambda}$ converges to $Q^\star$, i.e., $\lim_{k\to\infty} Q_k =Q^\star$, where $Q_k = {\cal T}_\lambda^{k-1} Q_0$ for some $Q_0\in K$. Moreover, $Q^\star=\begin{bmatrix}Q_1^\star &\dots &Q_{N+1}^\star\end{bmatrix}$ satisfies $Q_\lambda^\star = \sum_i \lambda_i Q_i^\star$.
\end{theorem}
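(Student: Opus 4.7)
The plan is to reduce convergence of the vector-valued iteration to two classical scalar facts: contraction of the ordinary Bellman optimality operator for the scalarized reward, and contraction of the Bellman evaluation operator for the resulting greedy policy. Throughout I read the $r(s,a)$ appearing in the displayed definition of ${\cal T}_\lambda$ as the vector $R(s,a)$, since otherwise ${\cal T}_\lambda$ does not map into $K$.

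The first step is scalarization. Let $Q_{k+1} = {\cal T}_\lambda Q_k$ and define $\tilde{Q}_k(s,a) := \lambda^\top Q_k(s,a)$. Premultiplying the recursion by $\lambda^\top$ and using the identity $\lambda^\top Q_k(s', \arg\max_{a'} \lambda^\top Q_k(s',a')) = \max_{a'} \tilde{Q}_k(s',a')$ yields
\[
\tilde{Q}_{k+1}(s,a) = \lambda^\top R(s,a) + \gamma\, \mathbb{E}_{s'}\!\left[\max_{a'} \tilde{Q}_k(s',a')\right],
\]
which is the standard Bellman optimality iteration for the MDP with scalar reward $\lambda^\top R$. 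Because this operator is a $\gamma$-contraction in sup-norm, $\tilde{Q}_k$ converges geometrically to the unique optimal action-value, which must coincide with $Q_\lambda^\star$ since $\pi^\star := \arg\max_\pi \sum_i \lambda_i V_i^\pi$ is optimal for the scalarized reward.

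The second step is to argue that the greedy action used in the bootstrap stabilizes. Under either a uniqueness assumption on the maximizer of $Q_\lambda^\star(s,\cdot)$, or with a fixed deterministic tie-breaking rule that is also used to define $\pi^\star$, once $\|\tilde{Q}_k - Q_\lambda^\star\|_\infty$ drops below half the minimum optimality gap, the policy $\pi_k(s) := \arg\max_{a'} \lambda^\top Q_k(s,a')$ coincides with $\pi^\star$ pointwise. From that iterate onwards the componentwise update becomes $Q_{i,k+1}(s,a) = r_i(s,a) + \gamma\, \mathbb{E}_{s'}[Q_{i,k}(s', \pi^\star(s'))]$, which is the policy-evaluation operator for $\pi^\star$ and reward $r_i$; this is itself a $\gamma$-contraction in sup-norm, so each component $Q_{i,k}$ converges to $Q_i^\star := Q_i^{\pi^\star}$.

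Setting $Q^\star := \begin{bmatrix} Q_0^\star & \cdots & Q_N^\star\end{bmatrix}^\top$, steps one and two jointly show that $Q^\star$ solves the claimed fixed-point equation and that value iteration converges to it from any $Q_0 \in K$. The identity $Q_\lambda^\star = \sum_i \lambda_i Q_i^\star$ is immediate from linearity of expectation: $\sum_i \lambda_i Q_i^{\pi^\star}$ is the $Q$-value of $\pi^\star$ under the scalarized reward $\lambda^\top R$, which by optimality of $\pi^\star$ equals $Q_\lambda^\star$. I expect the main obstacle to be the tie-breaking issue in step two, since stabilization of $\pi_k$ fails if distinct action sequences sit on the boundary of the $\arg\max$; this is the only place where a uniqueness or deterministic tie-break assumption is really needed, and it should be stated explicitly in the proof.
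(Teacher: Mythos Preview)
Your scalarization step coincides with the paper's: both premultiply by $\lambda^\top$ and recognize the result as the ordinary Bellman optimality operator for the reward $\lambda^\top R$. After that the arguments diverge. The paper does not pass through policy stabilization; it introduces the functional $\|Q\|_\lambda \coloneqq \max_{s,a}|\lambda^\top Q(s,a)|$, argues it is a norm on the cone $K$ of nonnegative matrices, shows $\|{\cal T}_\lambda Q_1 - {\cal T}_\lambda Q_2\|_\lambda \le \gamma \|Q_1 - Q_2\|_\lambda$ directly, and then invokes a Banach-style fixed-point argument to obtain $Q^\star$ with $\lambda^\top Q^\star = Q_\lambda^\star$. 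Your route instead waits for the scalar iterates to fall inside the optimality-gap window, freezes the greedy action to $\pi^\star$, and then runs each coordinate as a separate policy-evaluation contraction. The trade-off: the paper's single-contraction argument needs no uniqueness hypothesis on $\arg\max_a Q_\lambda^\star(s,a)$, whereas your stabilization step genuinely does (and your deterministic-tie-break variant does not quite work as stated, since approximate iterates $\tilde Q_k$ need not have exact ties where $Q_\lambda^\star$ does). Conversely, your two-phase argument is more explicit about componentwise convergence: $\|\cdot\|_\lambda$ is only a seminorm on the ambient matrix space (distinct nonnegative $Q_1,Q_2$ can satisfy $\lambda^\top Q_1 = \lambda^\top Q_2$), so the paper's contraction immediately delivers $\lambda^\top Q_k \to Q_\lambda^\star$ but leaves the coordinate-by-coordinate limit $Q_k\to Q^\star$ to a further step, which is precisely what your policy-evaluation phase supplies.
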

We refer the reader to the appendix for all the proofs.

\paragraph{Q-learning extension} By extension, the previous result also applies to standard $Q$-learning, as shown here.
\begin{corollary}\label{theorem:convergence_q_learning} Consider a  policy that visits all state-action pairs infinitely often, and assume to learn the $Q$-values according to the following update at time $t$ for each $i$
\[
\resizebox{\hsize}{!}{%
$Q_{t+1,i} (s_t,a_t) = (1-\alpha_t) Q_{t,i}(s_t,a_t) + \alpha_t (r_i(s_t,a_t) + \gamma  Q_{t,i}(s_{t+1}, a_t')),$%
}
\]
for some learning rate $\alpha_t$ satisfying the Robbins-Monro conditions \cite{borkar2009stochastic} and  $a_t' = \arg\max_{a'} \sum_i \lambda_i Q_{t,i}(s_{t+1}, a')$. Let $
Q_t = \begin{bmatrix}
    Q_{t,0}&
    \dots&
    Q_{t,N}
\end{bmatrix}^\top$.
Then, w.p.1, $\lambda^\top Q_t \to Q_\lambda^\star$.
\end{corollary}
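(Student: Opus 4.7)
The plan is to view the per-component update as an asynchronous stochastic-approximation scheme for the fixed-point equation of ${\cal T}_\lambda$ established in Theorem~\ref{theorem:convergence_value_iteration}, and then appeal to the classical convergence theory of asynchronous $Q$-learning.

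First, I would stack the $N+1$ components into a single matrix iterate $Q_t\in K$ and rewrite the given update compactly as
\[
Q_{t+1}(s_t,a_t) = Q_t(s_t,a_t) + \alpha_t\bigl[({\cal T}_\lambda Q_t)(s_t,a_t) - Q_t(s_t,a_t) + w_t\bigr],
\]
with noise $w_t := R(s_t,a_t) + \gamma Q_t(s_{t+1},a_t') - ({\cal T}_\lambda Q_t)(s_t,a_t)$, action $a_t' = \arg\max_{a'}\lambda^\top Q_t(s_{t+1},a')$, and entries at state-action pairs other than $(s_t,a_t)$ left unchanged. Letting ${\cal F}_t$ be the natural filtration, conditional independence of $s_{t+1}\sim P(\cdot|s_t,a_t)$ from the past given $(s_t,a_t)$ yields $\mathbb{E}[w_t\mid{\cal F}_t]=0$, so $(w_t)$ is a martingale difference.

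Next, Theorem~\ref{theorem:convergence_value_iteration} identifies ${\cal T}_\lambda$ as a $\gamma$-contraction in the sup-norm with unique fixed point $Q^\star\in K$. Combined with the Robbins--Monro stepsize conditions and the assumption that every state-action pair is visited infinitely often, I would invoke the standard convergence theorem for asynchronous stochastic approximation of sup-norm contractions (Tsitsiklis, 1994; see also Proposition~4.4 of Bertsekas \& Tsitsiklis, 1996) to conclude $Q_t\to Q^\star$ almost surely. Since $\lambda^\top Q^\star = Q_\lambda^\star$ by the last assertion of Theorem~\ref{theorem:convergence_value_iteration}, linearity of the inner product gives $\lambda^\top Q_t\to Q_\lambda^\star$ with probability one.

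The main obstacle is supplying the two standard preconditions of the cited convergence theorem in the present multi-component setting. The first is a uniform \emph{a priori} bound $\mathbb{E}[\|w_t\|^2\mid{\cal F}_t]\le C(1+\|Q_t\|_\infty^2)$ on the noise; under bounded rewards this reduces to controlling $\|Q_t\|_\infty$, which follows by the usual boundedness argument for $Q$-learning applied entrywise. The second subtlety is that all components $Q_{t,i}$ share the same greedy action $a_t'$, so their noise terms are correlated across $i$; however, the martingale-difference and contraction properties are preserved because the conditional expectation is taken over the single transition $s_{t+1}$ and the contraction of Theorem~\ref{theorem:convergence_value_iteration} already acts on the entire row vector simultaneously, which is exactly the regime handled by the cited theorem.
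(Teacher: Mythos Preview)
Your proposal is correct and follows essentially the same approach as the paper: stack the $N+1$ components into a single vector iterate, rewrite the per-component update as an asynchronous stochastic-approximation scheme $Q_{t+1}=Q_t+\alpha_t({\cal T}_\lambda Q_t-Q_t+w_t)$ with martingale-difference noise, and then invoke the contraction property of ${\cal T}_\lambda$ from Theorem~\ref{theorem:convergence_value_iteration} together with standard asynchronous $Q$-learning convergence theory to obtain $Q_t\to Q^\star$ and hence $\lambda^\top Q_t\to Q_\lambda^\star$. Your write-up is in fact more explicit than the paper's about the technical preconditions (noise bound, boundedness of iterates, shared greedy action), and the only minor imprecision is that Theorem~\ref{theorem:convergence_value_iteration} establishes the contraction in the $\|\cdot\|_\lambda$ norm rather than the componentwise sup-norm you cite.
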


\subsection{Algorithms}
\label{sec:alg}
In this section we outline how to change DQN and SAC to solve the optimization problem in (\ref{eq:Objective}).

\paragraph{DQN Extension} Based on the argument of  Corollary \ref{theorem:convergence_q_learning} , we propose a straightforward adaptation for DQN. We  initialize $(N+1)$ $Q$-networks  parameterized by $\theta_i, i=0,\dots,N$ (with the corresponding target networks $\bar \theta_i$). Then, the $i$-th parameter is updated according to $\theta_{i} \gets \theta_{i} + \alpha_t \nabla_{\theta_i} {\cal L}_{\theta_i}$, where
\[
\resizebox{\hsize}{!}{%
${\cal L}_{\theta_i} = \mathbb{E}_{(s,a,r,s') \sim {\cal B}}\left[(r_i+\gamma Q_{\bar \theta_i}(s',\pi_\theta(s')) - Q_{\theta_i}(s,a))^2\right],$%
}
\]
with ${\cal B}$ being the replay buffer. Then, we define the greedy policy according to $\pi_\theta(s) = \arg\max_{a} \sum_i \lambda_i Q_{ \theta_i}(s,a)$, and the overall value as $\sum_i \lambda_i Q_{\theta_i}$.

\paragraph{SAC Extension}
Also for SAC we propose a simple modification. The policy evaluation step follows directly from the DQN extension, while  for the policy improvement step
we minimize the following KL divergence:
\begin{equation*}
    \pi_\phi \gets \arg\min_\phi \mathbb{E}_{s\sim\mathcal{B}} \left[{\rm KL}\left(\pi_\phi(\cdot|s), \frac{\exp(\sum_i \lambda_i Q_{\theta_i}(s,\cdot))}{Z_\theta}\right)\right],
\end{equation*}
for some policy parametrized by $\phi$.

\paragraph{Optimization of the dual variables} As previously argued, we use a descent-ascent approach. This method entails solving an inner RL optimization problem, and then updating the Lagrangian variables accordingly. Therefore, once we have learnt value functions we can perform a gradient update in our dual variable. Then our Lagrange update will be similar to \cite{safeRL}:
\begin{equation}
    \lambda_{t+1} =\max(0, \lambda_t - \beta_t (V_{\theta} +\xi)),
\end{equation}
for some learning rate $\beta_t$ and vector $V_{\theta} = \begin{bmatrix}
    V_{\theta_0} & \dots V_{\theta_N}
\end{bmatrix}^\top$ and $V_{\theta_i}(s) = \max_{a} \sum_i \lambda_i Q_{\theta_i}(s,a)$  (note that the first component of $\lambda$ is set equal to $1$, and is not changed).
In practice, the descent-ascent approach can be done simultaneously by making sure that the learning rate $\beta_t$ is sufficiently small compared to the learning rate of the inner maximization problem.

\begin{figure*}[tp]
\begin{subfigure}{.5\textwidth}
\centering
  \includegraphics[width=.95\linewidth]{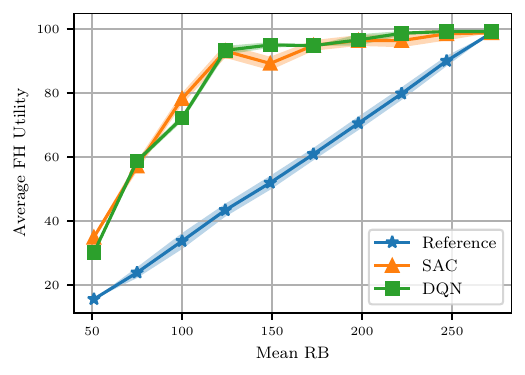}
  \caption{Average FH utilization over mean RB number}
  \label{fig:sub1}
\end{subfigure}%
\begin{subfigure}{.5\textwidth}
  \centering
  \includegraphics[width=.95\linewidth]{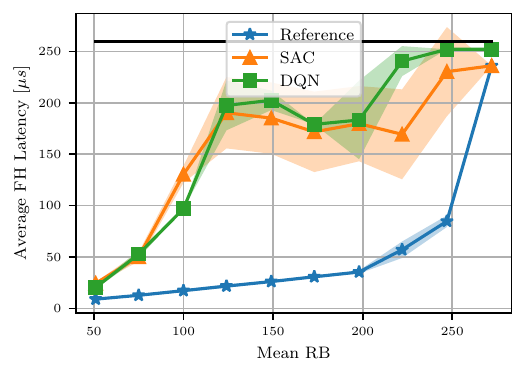}
  \caption{Average latency over mean RB number}
  \label{fig:sub2}
\end{subfigure}
\caption{Average FH utilization over the average RB number(averaged over cells). The black line in figure \ref{fig:sub2} shows the latency criterion which is set to 260 $\mu$s. Shaded regions show 3 sample standard deviations}
\label{fig:gain}
\end{figure*}

\section{Numerical Evaluation}
\label{sec:numerical_evaluation}

\noindent 
System simulations are performed via in-house build Baseb{A}nd {S}ystem {S}imulator (BASS) based on ns-3. BASS is used to simulate transmission of DL data and precoding weights over FH and measure latency and packet loss. We use $K=3$ cells that share a FH with capacity of $25$ Gbps.
The system parameters and values used for the simulations are shown in Table \ref{tab:Sim_Param}. For DQN we assumed \emph{homogeneous load}, to simplify the action space, while for SAC we used an auto-regressive policy (see also the appendix for details) to address the exponential size of the actions-space. 

\begin{table}[b]
\begin{center}
\caption{Simulation parameters.}
\begin{tabular}{ |c|c|c| }
 \hline
 {\bf Parameter} & {\bf Symbol} & {\bf Value} \\ 
 \hline
 Bandwidth & $B$ & 100 MHz\\
 \hline
 Number of available PRBs & $N^{\textsc{prb}}_{\textsc{B},\mu}$ & 273\\
 \hline
 Number of scheduled PRBs & $N_{t,k}^{\textsc{prb}}$ & 1\ldots273 \\
 \hline
 variance of scheduled PRBs & $\sigma_{N^{\textsc{prb}}}$ & 1 \\
 \hline
 Number of REs per RB & $N_{\textsc{re}}$ & $12\times14=168$\\
 \hline
 Subcarrier spacing index & $\mu$ & 1 ($\Delta f_{\textsc{scs}}=$30kHz)\\
 \hline
 Symbol duration & $T_{symb}^\mu$ & 33.33 $\mu$s\\
 \hline
 Slot duration & $T_{slot}^\mu$ & 0.5 ms\\
 \hline
 Number of cells & $K$ & 3\\
 \hline
 Number of antennas & $N_{ant}$ & 64\\    
 \hline
 Number of layers & $\upsilon_{lay}$ & 12\\ 
 \hline
 Modulation order & $q_{t,k}$ & $\mathcal{Q} =\{6,8\}$ \\    
 \hline
 Number of weight bits & $b_{t,k}^w$ & $\mathcal{B}^w =\{16,\ldots,22\}$\\    
 \hline
 Precoder granularity & $r_{t,k}^w$ & $\mathcal{R}^w =\{1,2,4\}$\\    
 \hline
 Max. latency safety parameter & $\xi$  & $0.025$\\
 \hline
 Max allowed latency & $\tau_{max}$  & 260 $\mu s$\\
 \hline
 FH capacity & $C_{\textsc{fh}}$ & 25 Gb/s\\    
 \hline

\end{tabular}
\label{tab:Sim_Param}
\end{center}
\end{table}

Fig. \ref{fig:sub1} shows the average FH utilization over the average number of PRBs for SAC, DQN, and the reference scheme designed to support worst case scenario, i.e., for 273 PRBs. The FH utilization increases with increasing number of scheduled PRBs until it reaches maximum number of PRBs. DQN and SAC improve average FH utilization by 70.3\% on average compared to the reference scheme. The average FH utilization of DQN and SAC converges to the reference scheme when approaching high load. SAC \cite{haarnoja2018soft} performs just as well as DQN. 

Fig. \ref{fig:sub2} show that the average latency stays within the limit. In this experiment our safety parameter $\xi$ was set to $0.025$, and three standard deviations correspond to the $0.15\%$-percentile, which far exceeds our designated safety parameter. 

In the Appendix, in Fig. \ref{fig:sac_training}   and \ref{fig:DQN_train},  we depict the values of  the constraints converge for $V_{C_1}$ and $V_{C_2}$,  showing that all the value functions eventually learn to be above the designated constraints. Furthermore, it can be seen that the Lagrange multipliers converge as well.
\section{Conclusions}

\label{sec:conclusions}
In this paper, we have investigated an adaptive FH compression scheme which can operate under latency and packet loss constraints. We have formulated this problem as a constrained optimization problem and designed an RL algorithm to solve it. Finding the exact solution is NP hard and requires accurate modelling, which is hard to obtain in real scenarios. Therefore, we proposed a Deep-RL approach to solve the constrained optimization problem, entailing a novel update for the $Q$-values of the policy.  Simulation results have shown that our method successfully learns a FH compression policy that maximizes FH utilization, while satisfying FH latency and packet loss constraints. On average, the FH utilization is improved by 70.2\% for both the simplistic DQN and the more advanced SAC approach. 

\bibliographystyle{IEEEtran}
\bibliography{ref}

\onecolumn
\subsection*{Appendix A: Proofs}
In this section we provide the proof of Theorem \ref{theorem:convergence_value_iteration} and Corollary \ref{theorem:convergence_q_learning}.
\begin{proof}[Proof of Theorem \ref{theorem:convergence_value_iteration}]
First, we denote the optimal value function as
\begin{align*}
V_\lambda^\star(s) &= \max_\pi \lambda^\top V^\pi(s)= \max_\pi \sum_i \lambda_i V_i^\pi(s),
\end{align*}
and similarly, we let $Q_\lambda^\star(s,a)$ be the corresponding optimal action-value function.
Remember the definition of ${\cal T}_\lambda$ for a fixed $\lambda$
\[
{\cal T}_{ \lambda}  Q(s,a) \coloneqq  r(s,a) + \gamma \mathbb{E}_{s'}\left[ Q(s', \arg\max_{a'}  Q(s',a')^\top  \lambda) \right].      
\]
and  define also the operator $\hat{\cal T}_\lambda Q = (\lambda^\top {\cal T}_\lambda Q)^\top$. 
Consequently, we find that
\begin{align*}
 \hat {\cal T}_{ \lambda}  Q(s,a)&=  \lambda^\top  r(s,a) + \gamma \mathbb{E}_{s'}\left[  \lambda^\top  Q(s', \arg\max_{a'}  Q(s',a')^\top  \lambda) \right],\\
&=  \lambda^\top  r(s,a) + \gamma \mathbb{E}_{s'}\left[\max_{a'}  \lambda^\top  Q(s', a') \right],
\end{align*}
where we used the property $\max_{a'}  \lambda^\top  Q(s', a') =\lambda^\top  Q(s', \arg\max_{a'}  Q(s',a')^\top  \lambda)$.

Consequently, we see that $Q_\lambda^\star$ is a fixed point for $\hat {\cal T}_{ \lambda}$. Therefore, it is sufficient to prove also ${\cal T}_{ \lambda}$ admits a fixed point $Q^\star$ that satisfies  $ \hat{\cal T}_\lambda Q_\lambda^\star = (\lambda^\top {\cal T}_\lambda Q^\star)^\top$.

We prove first that  ${\cal T}_{ \lambda}$ is a contraction. Consider the following norm $\|Q\|_\lambda = \max_{s,a} |\lambda^\top Q(s,a)| = \||\lambda^\top Q|\|_\infty$ (remember that $Q$ is an $N+1\times |S|\cdot|A|$ matrix). We can easily verify that this is a valid norm. Especially, since the reward is always positive, we note that $\|Q\|_\lambda=0$ implies $Q=0$, since $\max_{s,a} |\lambda^\top Q(s,a)|=\max_{s,a} |\sum_i \lambda_i Q_i(s,a)|$. Since  $\lambda_i>0$ and $Q_{i}(s,a)\geq 0$, it must follow that each $Q_i$ is $0$, leading to $Q=0$. Then, consider the following set of inequalities:
\begin{align*}
    |(\hat {\cal T}_{ \lambda} Q_1- \hat {\cal T}_{ \lambda} Q_2)(s,a)| &=  \gamma|\lambda^\top \mathbb{E}_{s'} [Q_1(s', \arg\max_{a'} Q_1(s',a')^\top  \lambda) - Q_2(s', \arg\max_{a'} Q_2(s',a')^\top  \lambda)  ]|,\\
    &=\gamma | \mathbb{E}_{s'} [\max_{a'} \lambda^\top Q_1(s',a') - \max_{a'} \lambda^\top Q_2(s',a')   ]|,\\
    &\leq \gamma  \mathbb{E}_{s'} [|\max_{a'} \lambda^\top Q_1(s',a') - \max_{a'} \lambda^\top Q_2(s',a')|   ],\\
    &\leq \gamma \max_{s',a'}|\lambda^\top (Q_1(s',a') - Q_2(s',a'))|,\\
     &\leq \gamma \|Q_1-Q_2\|_\lambda.
\end{align*}
Therefore, taking the supremum over $(s,a)$ on the l.h.s. we find
\begin{align*}
\max_{s,a}   |(\hat {\cal T}_{ \lambda} Q_1- \hat {\cal T}_{ \lambda} Q_2)(s,a)|&= \max_{s,a}   |\lambda^\top{\cal T}_{ \lambda} (   Q_1- Q_2)(s,a)|,\\
&=
\|  {\cal T}_\lambda Q_1- {\cal T}_\lambda Q_2\|_\lambda \leq \gamma \|Q_1-Q_2\|_\lambda.
\end{align*}
Therefore this mapping ${\cal T}_\lambda$ is a contraction in the $\|\cdot\|_\lambda$ norm, and the fixed point coincides with that of $\hat {\cal T}_{ \lambda}$. This property guarantees that repeated application of the operator will converge to a point $Q$ satisfying 
\[
\|{\cal T}_\lambda Q -Q\|_\lambda = 0
\]
Since this is a valid norm, the point satisfying $\|{\cal T}_\lambda Q -Q\|_\lambda = 0$ is unique, and we denote it by $Q^\star$. Consequently, it follows that $Q_\lambda^\star=\lambda^\top Q^\star = \hat {\cal T}_\lambda Q^\star$.  Thus  the greedy value satisfies
\[
Q_\lambda^\star = \sum_i \lambda_i Q_i^\star, \hbox{ where }  Q_i^\star(s,a) =  r_i(s,a) + \gamma \mathbb{E}[ Q_i^\star(s',\arg\max_{a'}  Q_\lambda^\star(s',a'))].
\]

\end{proof}

\begin{proof}[Proof of Corollary \ref{theorem:convergence_q_learning}]
 In the following, we consider a Markov policy $\pi$ that in a finite state-action space MDP guarantees that every state-action pair is visited infinitely-often. Classical $Q$-learning considers the following stochastic approximation scheme
\[
Q_{t+1} (s_t,a_t) = (1-\alpha_t(s_t,a_t)) Q_t(s_t,a_t) + \alpha_t y_t
\]
where $y_t= r(s_t,a_t) +  \gamma \max_{a'} Q_t(s_{t+1}, a')$ is the target value at time $t$.

For a suitable learning rate,  under a behavior policy $\pi$ we are guaranteed that $Q_t \to Q_\lambda^\star$ a.s., the optimal action-value function. Consequently, we obtain that $\lim_{t\to\infty} Q_t = \sum_i \lambda_i Q_i^\star$ due to the aforementioned discussion, where the greedy policy is defined as $\pi^\star(s) =\arg\max_{a} \sum_i\lambda_i Q_i^\star(s,a)$. 

By standard arguments, it follows immediately  that  updating the $i$-th value according to  the following update
\[
Q_{t+1,i} (s_t,a_t) = (1-\alpha_t(s_t,a_t)) Q_{t,i}(s_t,a_t) + \alpha_t y_{t,i},
\]
with 
$
y_{t,i} = r_i(s_t,a_t) + \gamma  Q_{t,i}(s_{t+1}, a_t')$,  with $a_t' = \arg\max_{a'} \sum_i \lambda_i Q_{t,i}(s_{t+1}, a'),
$ guarantees convergence to $Q_i^\star$. To see this, let
\[
Q_t = \begin{bmatrix}
    Q_{t,0}\\
    \vdots\\
    Q_{t,N}
\end{bmatrix}, \quad y_t = \begin{bmatrix}
    y_{t,0}\\
    \vdots\\
    y_{t,N}
\end{bmatrix}
\]
Then
\[
Q_{t+1} (s_t,a_t) =  Q_{t}(s_t,a_t) + \alpha_t ({\cal T}_\lambda Q_{t} - Q_{t} + w_{t}),
\]
with $w_t = y_t - {\cal T}_\lambda Q_t$. We easily see that $\mathbb{E}[w_t|{\cal F}_t]=0$, where ${\cal F}_t= \sigma(s_1,a_1,\dots,s_t,a_t)$. Thus, since we already know that ${\cal T}_\lambda$ is a contraction, with fixed point $Q^\star$, and all the quantities are bounded, then we are guaranteed that $Q_t\to Q^\star$ a.s., thus $\lambda^\top Q_t \to Q_\lambda^\star$ almost surely.
\end{proof}

\subsection*{Appendix B: Algorithm}
The following algorithm summarizes our training loop for our off policy learning algorithm. We have omitted the typical sampling/data collection steps for the sake of clarity. 
\begin{algorithm}
\caption{DRL-FC based on SAC with Actor-Replay buffer.}\label{alg:DDQN}
    \begin{algorithmic}[1]
    \REQUIRE Learning rates $\alpha,\beta$; Initial value $\lambda_0$
    \STATE $\textbf{Initialize}$ $\theta$, $\phi$, $\bar\lambda$
    \FOR{$t=0,1,\dots$}
        \STATE $(s, a, s') \sim \mathcal{D}_{p}$ \% Sample from the critic buffer
        \STATE $\theta_i \leftarrow \theta_i - \eta\nabla_{\theta_i}\mathcal{L}(\theta) \qquad\forall i = 0, 1, ... N$
        \STATE Update new priorities $p$ in the critic buffer $\mathcal{D}_{p}$
        \STATE Update new priorities $1/p$ in the actor buffer $\mathcal{D}_{1/p}$
        \STATE $(s, a, s') \sim \mathcal{D}_{1/p}$ \%Sample from the actor buffer
        \STATE $\theta_i \leftarrow \theta_i - \eta\nabla_{\theta_i}\mathcal{L}(\theta) \qquad\forall i = 0, 1, ... N$
        \STATE $\phi \leftarrow  \arg\min_\phi D_{KL}(\pi_\phi||e^{\lambda^\top Q}/Z_\theta)$
        \STATE $\lambda \leftarrow \max(0, \lambda - \eta_\lambda (V^\pi_\psi + \xi))$
        \STATE Update new priorities $p$ in the critic buffer $\mathcal{D}_{p}$
        \STATE Update new priorities $1/p$ in the actor buffer $\mathcal{D}_{1/p}$
        \STATE $(s, a, s') \sim \mathcal{D}_{U}$ \%Sample from the uniform buffer
        \STATE $\theta_i \leftarrow \theta_i - \eta\nabla_{\theta_i}\mathcal{L}(\theta) \qquad\forall i = 0, 1, ... N$
        \STATE $\phi \leftarrow  \phi - \eta \nabla_\phi D_{KL}(\pi_\phi||e^{\lambda^\top Q}/Z_\theta)$
        \STATE Update new priorities $p$ in the critic buffer $\mathcal{D}_{p}$
        \STATE Update new priorities $1/p$ in the actor buffer $\mathcal{D}_{1/p}$
        \STATE $\bar\psi\leftarrow(1-\kappa)\bar\psi + \kappa\psi$ 
        \STATE $\lambda \leftarrow \max(0, \lambda - \eta_\lambda (V^\pi_\psi + \xi))$
    \ENDFOR
    \STATE $\textbf{Return}$ $\phi$
    \end{algorithmic}
\end{algorithm}

We employ a series of modifications in our method:
\begin{itemize}
    \item We use \emph{prioritized experience replay} \cite{schaul2015prioritized}, as well as a modification to our buffer to also include sampling of experiences for the actor via \emph{inverse prioritization sampling}\cite{saglam2022actorreplay}, which helps training the actor only on experiences that have small TD-error. 
    \item We also have a shared neural network for all of our value objectives $V_i$ and $Q_i$. All value function attempt to model the same state distribution, $d^\pi(s)$, thus we can save computational time and memory with a multi-headed neural network. To make this framework fit our prioritization scheme we simply sum the PAL priorities of all value heads, we thus are forced to include the importance weights to correct our introduced bias. 
    \item For our DQN solution we utilize Boltzmann exploration and the Double DQN modification\cite{van2016deep}. 
    \item For our actor-critic solution we need to model the joint distribution over all actions $\pi(a_1, ... a_K | s)$. which is exponential in $K$. A common solution, then, is to use an auto-regressive approach. Moreover, since each agent (RRU) needs to collaborate with the other agents, we believe that a simple way to achieve collaboration is to generate the full action in an auto-regressive manner. That is, we use the current state $s$ to select action $a_1$, then we use $(s,a_1)$ to select $a_2$, etc.  This modification makes our policy recurrent in the action sequence, and conditions the action of the $k$-th agent based on the actions selected by the previous agents. An important matter regards the order of the actions chosen which we know from other modelling works can have a severe impact on the performance\cite{menick2018generating}, since our sequence is rather short we argue that this is not a big issue. We can use any recurrent model of choice and in this work we use a Transformer\cite{46201attention}. 
\end{itemize}

Finally, the training parameters used are summarized in the following table. We used PyTorch as our automatic differentiation library.
\begin{center}
\begin{tabular}{ |c|c|c| }
\hline
{\bf Parameter} & {\bf Symbol} & {\bf Value} \\ 
\hline
Discount factor & $\gamma$ & 0.95\\
\hline
Q learning rate & $\eta_Q$ & $10^{-3}$\\
\hline
policy learning rate & $\eta_\pi$ & $10^{-4}$\\
\hline
$\lambda $ learning rate & $\eta_\lambda$ & $10^{-4}$\\
\hline

Soft update frequency & $\kappa$ & $5\cdot10^{-3}$\\
\hline
Entropy Constraint & $\mathcal{H}_0$ &$ 0.2\log|\mathcal{A}|$\\
\hline
Number of parameters in NN &  &$1.8\cdot10^{5}$\\
\hline
\end{tabular}
\label{tab:Sim_Param}
\end{center}

\subsection*{Appendix C: Modelling User Behaviour as Random Process}
\noindent
We model the traffic over time as a stochastic process $\{\overline{N}^{\textsc{prb}}_{t,k}\}_{t\in T}$. We make an assumption that the number of users remain constant over all cells, since $\overline{N}^{\textsc{prb}}_{t,k}$ is roughly proportional to the number of users we create our random process as follows.
\begin{align*}
    \overline{N}^{\textsc{prb}}_{t+1,k} &= \overline{N}^{\textsc{prb}}_{t,k} + w_k\\
    \overline{N}^{\textsc{prb}}_{0,k} &= U(1, N^{\textsc{prb}}_{\textsc{B},\mu})\\
    w_k &\sim W(\overline{N}^{\textsc{prb}}_{t,k})\\
    \textit{s.t.} \qquad \sum_k w_k &= 0\\
    N_{low} &< \overline{N}^{\textsc{prb}}_{t,k} < N_{high}\qquad \forall t
\end{align*}
The choice of distribution $W$ is irrelevant, in order to enforce the constraint we can use rejection sampling. Rejection sampling is however a very slow method and doesn't lend well as $K$ increases. To implement an efficient sampling algorithm we introduce the following algorithm:
\begin{algorithm}[h]
\caption{Propagate stochastic process}
    \begin{algorithmic}[1]
    \REQUIRE $X$, $x_{min}$, $x_{max}$
    \STATE $\hat X \leftarrow X$
    \FOR{i = 1, 2, ... K}
        \FOR{j = i+1, i+2, ... K}
            \STATE $x_{low} = -\min(3, \hat X_{i} - x_{min}, x_{max} - \hat X_{j})$
            \STATE $x_{high} = \min(3, \hat X_{j} - x_{min}, x_{max} - \hat X_{i})$
            \STATE $\Delta x \sim U(x_{low}, x_{high})$
            \STATE $X_{i} = X_i + \Delta x$
            \STATE $X_{j} = X_j - \Delta x$
        \ENDFOR
    \ENDFOR
    \STATE \textbf{Return} $X$
    \end{algorithmic}
\end{algorithm}

\noindent
The exact form of this distribution is not to obvious to us, however in scenarios where the propagated vector $X$ has all elements far away from the minimum and maximum constraints, then we can see that $W$ is simply the sum of $K-1$ uniform distributions. When $K \rightarrow \infty$ then $W$ will approach a Gaussian according to central limit theorem. For our case when $K = 3$ we have that our $W$ distribution is a triangle distribution, if not accounting for the boundaries.

\subsection*{Appendix C: Values and Lagrange multipliers}

\begin{figure}[!h]
    \begin{subfigure}{0.5\textwidth}
        \includegraphics{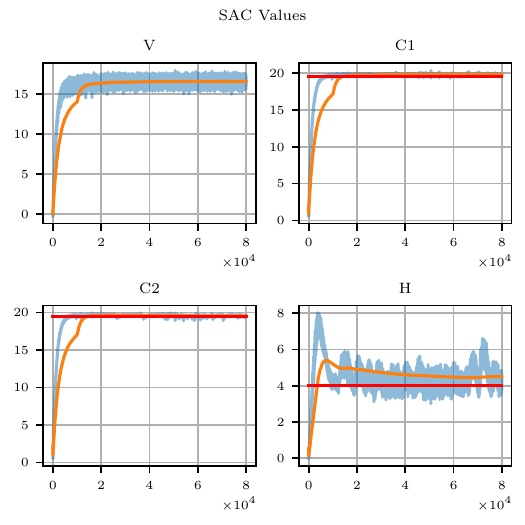}
        \label{fig:sac_values}
    \end{subfigure}
    \begin{subfigure}{0.5\textwidth}
        \includegraphics{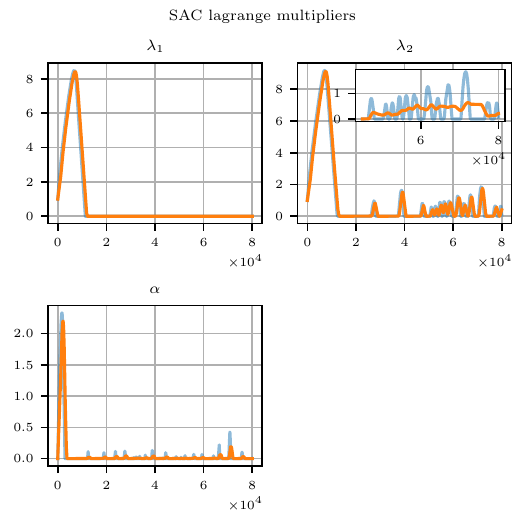}
        \label{fig:sac_multipliers}
    \end{subfigure}
    \caption{Training values and multipliers in over training iterations. we've also highlighted the running average in a window of 1000 iterations}
    \label{fig:sac_training}
\end{figure}
For SAC, in figure \ref{fig:sac_training} we can see that the values for the different value functions $V_r$, $V_{C_1}$, $V_{C_2}$ and $V_{\mathcal{H}}$. The red lines indicate where the constraints have been set to. We see that for all of the value heads we of our critic NN eventually learns to be above the designated constraints. We also see that the Lagrange multipliers in figure \ref{fig:sac_training} are oscillating very heavily, this can be explained by the introduced delay with exponential averaging of our target networks. For our Lagrange multipliers we essentially are tuning a control problem with large delays.


Similarly, we also see results for DQN in figure \ref{fig:DQN_train}. The convergence of the Lagrange multipliers in this case is a lot more apparent as we see that $\lambda_1 \rightarrow 0$ and $\lambda_2$ roughly approaches $0.14$. Just as before we've highlighted the constraints as red lines in the left plots of figure \ref{fig:DQN_train}. 

\begin{figure}[!h]
    \begin{subfigure}{0.5\textwidth}
        \includegraphics{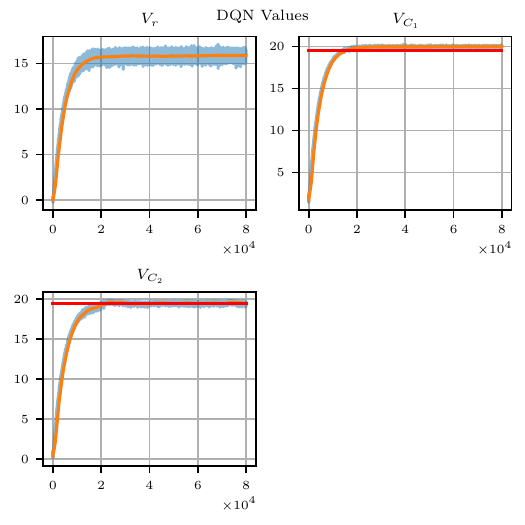}
        \label{fig:dqn_values}
    \end{subfigure}
    \begin{subfigure}{0.5\textwidth}
        \includegraphics{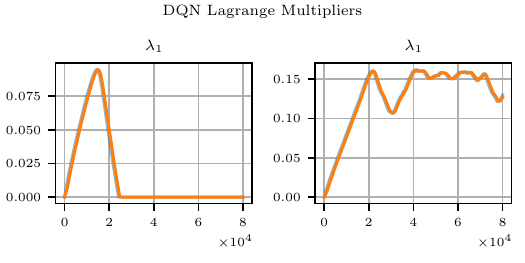}
        \label{fig:dqn_multipliers}
    \end{subfigure}
    \caption{Training values and multipliers in over training iterations. We've also highlighted the running average in a window of 1000 iterations}
    \label{fig:DQN_train}
\end{figure}

\end{document}